\def\BibTeX{{\rm B\kern-.05em{\sc i\kern-.025em b}\kern-.08em    T\kern-.1667em\lower.7ex\hbox{E}\kern-.125emX}}
\newtheorem{remark}{Remark}
\newcommand{\argmin}{\mathop{\mathrm{arg\,min}}}
\newtheorem{theorem}{Theorem}%
\newtheorem{proposition}[theorem]{Proposition}%
\title{\LARGE \bf
Optimizing Maximally Entangled State Generation \\ via Pontryagin’s Principle
}
\author{Nahid Binandeh Dehaghani, A. Pedro Aguiar, Rafal Wisniewski
\thanks{N. Dehaghani and R. Wisniewski are with Department of Electronic Systems, Aalborg University, Fredrik Bajers vej 7c, DK-9220 Aalborg, Denmark
 {\tt\small \{nahidbd,raf\}@es.aau.dk}}
\thanks{A. Aguiar is with the Research Center for Systems and Technologies (SYSTEC), Electrical and Computer Engineering Department, FEUP - Faculty of Engineering, University of Porto, Rua Dr. Roberto Frias sn, i219, 4200-465 Porto, Portugal
        {\tt\small pedro.aguiar@fe.up.pt}}%
}
\begin{document}

\maketitle
\thispagestyle{empty}
\pagestyle{empty}

\begin{abstract}
We propose an optimal control strategy to generate maximally entangled states in bipartite quantum systems. Leveraging the Pontryagin Principle, we derive time-dependent control fields that maximize the entanglement measure, specifically concurrence, within minimal time while adhering to input constraints. Our formulation addresses the Liouville-von Neumann dynamics of the reduced density matrix under unitary evolution. The strategy is numerically validated through simulations, demonstrating its ability to drive the system from an initial perturbed separable state to a maximally entangled target state. The results showcase the effectiveness of switching control fields in optimizing entanglement, with potential applications in quantum technologies, including communication, computation, and sensing.

\end{abstract}

\section{Introduction}
Quantum entanglement is a cornerstone of quantum mechanics, enabling non-classical correlations between subsystems that contribute to many quantum technologies \cite{horodecki2009quantum}. It serves as a resource for quantum computation, communication, and sensing, where the ability to generate and utilize entangled states is critical. 
The preparation of high-fidelity entangled states requires precise control techniques to address the complexities of quantum systems, ensuring their effective utilization in diverse applications.

While several approaches to entanglement generation have been explored \cite{Lee2022GenerationOM, kues2017chip, riera2023remotely, gonzalez2015generation, leifer2003optimal}, the potential of optimal control theory \cite{boscain2021introduction, koch2022quantum} for generating maximally entangled states has yet to be explored. Quantum optimal control \cite{dehaghani2023quantum, dehaghani2022quantum} is promising to offer a powerful framework to address the challenges associated with entanglement generation, enabling precise manipulation of quantum systems to achieve specific target states or dynamics. By designing tailored control fields, this approach can enhance system performance and facilitate the preparation of entangled states. Furthermore, by formulating control problems that maximize performance metrics—such as entanglement fidelity—while minimizing resource consumption, such as time or energy, optimal control theory provides a pathway to address limitations in quantum technologies.


Motivated by the aforementioned challenges, we propose a quantum optimal control strategy tailored for the preparation of entangled states. Our approach formulates and solves a control problem aimed at maximizing entanglement within the shortest possible time, while adhering to input constraints. To demonstrate its effectiveness, we apply this strategy to a closed quantum system governed by the Liouville-von Neumann equation, highlighting its capability to prepare entangled states under unitary dynamics.

The remainder of the work is organized as follows: Section~\ref{sec:preliminaries} provides the necessary preliminaries, including a detailed discussion on entanglement monotones and measures, with a specific focus on concurrence analysis and the Pontryagin Minimum Principle (PMP) for optimal control. Section~\ref{sec:optimal_control} introduces the optimal control framework based on quantum entanglement measures, including the problem formulation and necessary conditions derived from Pontryagin's Minimum Principle. Finally, in Section~\ref{sec:numerical_validation}, we validate the proposed control strategies through numerical simulations and conclude with a summary of findings and future directions.\\

\noindent\textbf{Notation:} For a general continuous-time trajectory $x$, the term $x(t)$ indicates the trajectory evaluated at a specific time $t$. For writing the conjugate transpose of a matrix, we use the superscript $\dagger$. To denote wave functions as vectors, we use the Dirac notation such that $\left| \psi  \right\rangle =\sum\limits_{k=1}^{n}{{{\alpha }_{k}}\left| {{{{\psi }}}_{k}} \right\rangle }$, where $\vert \psi \rangle $ indicates a state vector, ${\alpha }_{k}$ are the complex-valued expansion coefficients, and $\vert {\psi}_k \rangle $ are fixed basis vectors. The notation bra is defined such that $\left \langle \psi  \right |= \left| \psi  \right\rangle^\dagger$.
The quantum density operator is denoted by 
$\rho =\sum\limits_{j}{{{p}_{j}}\left| {{\psi }_{j}} \right\rangle \left\langle  {{\psi }_{j}} \right|}$ where coefficients 
${p_{j}}$ are non-negative and add up to one.
For writing partial differential equations, we denote partial derivatives using $\partial$. The sign $\otimes$ indicates the tensor product. The notation $[\cdot,\cdot]$ represents a commutator. We denote the trace of a square matrix $A$ by $\operatorname{Tr}(A)$, and the partial trace over subsystem $B$ by $\operatorname{Tr}_B$.
Throughout the paper, the imaginary unit is $i = \sqrt{-1}$.
A d-dimensional Hilbert space is represented by $\mathbb{H}_d$,
and $\mathcal{H}_\rho$ is the set of linear bounded operators on 
$\mathbb{H}_d$ characterized by $d \times d$ complex matrices that are positive semi-definite, Hermitian, and have trace one. The set of complex numbers and purely imaginary numbers are shown by $\mathbb{C}$ and $i\mathbb{R}$, respectively.

\section{Preliminaries}\label{sec:preliminaries}
In this section, we provide the foundational concepts and theoretical background necessary for understanding the contributions of this work. We begin with the notions of entanglement monotones and measures. 
We then discuss the specific use of concurrence as a tool for analyzing quantum entanglement, and end with 
an overview of optimal control theory and necessary optimality conditions.

\subsection{Entanglement Monotones and Measures}
In this subsection, we introduce fundamental concepts from entanglement theory, focusing on entanglement monotones and measures. 

Entanglement measures must satisfy specific criteria to accurately reflect the entanglement properties of quantum states. For a bipartite entanglement measure \( E(\rho) \), the following postulates are generally accepted as defining characteristics \cite{donald2002uniqueness, vedral1998entanglement, vidal2000entanglement, horodecki2001entanglement}:
\begin{enumerate}
    \item {Quantification of Entanglement}: \( E(\rho) \) quantifies the intensity of entanglement of a state \( \rho \) by mapping system density matrices to positive real numbers, i.e., \( \rho \mapsto E(\rho) \in \mathbb{R}^{+} \). 

    \item {Zero for Separable States}: \( E(\rho) = 0 \) if and only if \( \rho \) is separable, indicating the absence of quantum entanglement. 

    \item {Monotonicity under LOCC}: \( E(\rho) \) does not increase on average under Local Operations and Classical Communication (LOCC) operations. For any LOCC operation \( \Lambda \), it holds that \( E(\Lambda \rho) \le E(\rho) \), reflecting the principle that entanglement cannot be increased through local actions and classical communication alone.

    \item {Agreement with Entropy of Entanglement for Pure States}: For pure states \( \rho = \left| \psi \right\rangle \left\langle \psi \right| \), the measure converges to the entropy of entanglement, aligning the entanglement measure with the von Neumann entropy for pure bipartite systems.
\end{enumerate}

Any function \( E(\rho) \) that satisfies the first three conditions is classified as an entanglement monotone. However, functions that fulfill conditions 1, 2, and 4, while also remaining non-increasing under deterministic LOCC transformations, are regarded as entanglement measures \cite{liintroduction}.
Some authors further define entanglement measures by imposing additional criteria, such as convexity, additivity, and continuity, as detailed in \cite{donald2002uniqueness, vedral1998entanglement}.

While $E(\rho)$ represents a general entanglement measure, different specific functions have been proposed to quantify entanglement. Key entanglement quantification criteria include the entropy of entanglement, Renyi entropy, entanglement of formation, Negativity, and Concurrence.
Amonst them, Concurrence, represented by $C(\rho)$, serves as an entanglement monotone and provides a compact way to quantify entanglement, particularly for bipartite qubit systems. In certain cases, entanglement measures, such as entanglement of formation, can also be expressed in terms of concurrence \cite{wootters2001entanglement}.
Given the importance and widespread use of concurrence as an entanglement monotone, our work will primarily focus on analyzing quantum entanglement through the lens of concurrence, as it provides a practical and insightful tool for quantifying entanglement.

\subsubsection{Analysis of Quantum Entanglement Using Concurrence}
For a bipartite quantum system described by a pure state \( |\psi\rangle \) in the Hilbert space \( \mathbb{H}_{d_A} \otimes \mathbb{H}_{d_B} \), the state can be expressed in the computational basis \( \{|ij\rangle\} \) as 
\begin{equation}
|\psi\rangle = \sum_{i=1}^{d_A} \sum_{j=1}^{d_B} \alpha_{ij} |i\rangle_A |j\rangle_B,
\end{equation}
where \( \alpha_{ij} \in \mathbb{C} \) are complex coefficients and satisfy the normalization condition 
\begin{equation}
\sum_{i=1}^{d_A} \sum_{j=1}^{d_B} |\alpha_{ij}|^2 = 1.
\end{equation}
The concurrence \( C(\rho) \) provides a compact way to quantify the entanglement of the state \( |\psi\rangle \). For separable states, 
the concurrence is zero. Conversely, for maximally entangled states, 
the maximum concurrence value
achieves its maximum value of 1. This corresponds to maximally entangled states such as the Bell states. 
The degree of entanglement is quantified as
\begin{equation}
C(\rho) = \sqrt{2 \left(1 - \operatorname{Tr}(\rho_A^2)\right)},
\end{equation}
where \( \rho_A \) is the reduced density matrix of subsystem \( A \), obtained by tracing out the degrees of freedom of subsystem \( B \), that is,
\begin{equation}
\rho_A = \operatorname{Tr}_B(|\psi\rangle \langle\psi|).
\end{equation}
To analyze the structure of \( |\psi\rangle \), we use the Schmidt decomposition, which expresses the pure state as \cite{Lee2022GenerationOM, nielsen1999conditions}
\begin{equation}
|\psi\rangle = \sum_{k=1}^r \sqrt{\lambda_k} \, |u_k\rangle_A \otimes |v_k\rangle_B,
\end{equation}
where
$\sqrt{\lambda_k}$ are the Schmidt coefficients, with \( \lambda_k \geq 0 \) and \( \sum_{k=1}^r \lambda_k = 1 \),
     \( |u_k\rangle_A \) and \( |v_k\rangle_B \) are orthonormal basis sets for \( \mathbb{H}_{d_A} \) and \( \mathbb{H}_{d_B} \), respectively, and
     \( r \) is the Schmidt rank.  
The Schmidt decomposition highlights the shared entanglement structure between the two subsystems.
Using the Schmidt decomposition, the reduced density matrices of subsystems \( A \) and \( B \) can be written in their spectral decompositions
\begin{equation}\label{schmit}
\rho_A \triangleq \operatorname{Tr}_B(\rho_{AB}) = \sum_{k=1}^r \left( \mathbb{I}_A \otimes \langle v_k^B | \right) \rho_{AB} \left( \mathbb{I}_A \otimes |v_k^B \rangle \right)
\end{equation}
where \( \mathbb{I}_A \) is the identity matrix in \( \mathbb{H}_{d_A} \). 
These reduced density matrices encapsulate the bipartite system's entanglement, with their eigenvalues and eigenvectors derived directly from the Schmidt decomposition.


\subsection{Optimal Control Problems via Minimum Principle of Pontryagin }\label{PMP}
We briefly revisit the formulation of optimal control problems, following the approach presented in \cite{pontryagin2018mathematical,bryson2018applied}, and extend its application to the quantum context in the subsequent section.
In a generic optimal control problem, we aim to minimize a given cost function, which is typically expressed as
\begin{equation}
{J} = \Phi\big(\mathbf{x}(t_0), t_0, {x}(t_f), t_f\big) + \int_{t_0}^{t_f} \mathcal{L} \big(\mathbf{x}(t), {u}(t), t\big) \, dt
\label{eq:cost_function}
\end{equation}
Here, the dynamics of the system are governed by the following differential equation
\begin{equation}
\dot{\mathbf{x}} = f\big(\mathbf{x}(t), \mathbf{u}(t), t\big),
\label{eq:state_dynamics}
\end{equation}
subject to the boundary conditions:
\begin{equation}
\Phi\big(\mathbf{x}(t_0), t_0\big) = \Phi_0,
\label{eq:initial_boundary}
\end{equation}
\begin{equation}
\Phi\big(\mathbf{x}(t_f), t_f\big) = \Phi_f.
\label{eq:final_boundary}
\end{equation}
In this framework, $\mathbf{x}(t)$ represents the state vector of the system \eqref{eq:state_dynamics}, $u(t)$ is the control vector, and $t$ denotes the independent time variable. The initial and final times are $t_0$ and $t_f$, respectively. The function $\Phi$ corresponds to the endpoint cost, and $\mathcal{L}$ denotes the running cost (Lagrangian).

To solve the optimization problem using the Pontryagin Minimum Principle (PMP), we introduce the Hamiltonian function \cite{Primerross}
\begin{equation}
\mathcal{H} = \mathcal{L} + {\pi}^T f,
\label{eq:hamiltonian}
\end{equation}
where ${\pi}$ represents the adjoint (costate) variable. According to PMP, the optimal control ${u}^*(t)$ is determined by solving the first-order optimality condition:
\begin{equation}
u^\star=\argmin_{u(t)} \mathcal{H}(x(t), u(t), \pi(t))
\label{eq:optimal_control}
\end{equation}
Once the control ${u}^*(t)$ is obtained from~\eqref{eq:optimal_control}, it is substituted back into the Hamiltonian. This enables us to derive the first-order conditions for the evolution of the state and adjoint variables as follows:
\begin{equation}
\dot{\mathbf{x}} = \frac{\partial \mathcal{H}}{\partial{\pi}},
\label{eq:state_evolution}
\end{equation}
\begin{equation}
\dot{{\pi}} = - \frac{\partial \mathcal{H}}{\partial {x}}.
\label{eq:adjoint_evolution}
\end{equation}
Additionally, if certain state variables are free at the initial or final times, specific transversality conditions need to be satisfied. These conditions are given as:
\begin{equation}
\boldsymbol{\pi}(t_0) = - \frac{\partial \mathcal{J}}{\partial \mathbf{x}_0},
\label{eq:transversality_initial_costate}
\end{equation}

\begin{equation}
H(t_0) = \frac{\partial {J}}{\partial t_0},
\label{eq:transversality_initial_time}
\end{equation}

\begin{equation}
\boldsymbol{\pi}(t_f) = \frac{\partial {J}}{\partial {x}_f},
\label{eq:transversality_final_costate}
\end{equation}

\begin{equation}
H(t_f) = -\frac{\partial{J}}{\partial t_f},
\label{eq:transversality_final_time}
\end{equation}
Equations \eqref{eq:state_evolution} and \eqref{eq:adjoint_evolution}, combined with the transversality conditions on the Hamiltonian, constitute a boundary value problem. 


\section{Optimal Control Based on Quantum Entanglement Measures }\label{sec:optimal_control}
This section examines the application of optimal control strategies to enhance the utility of entangled states. We begin by formulating the optimal control problem, followed by presenting the necessary optimality conditions based on Pontryagin’s Minimum Principle, building on the discussion provided earlier in Section \ref{PMP}.

\subsection{Optimal Control Problem Statement}
Consider a bipartite quantum system composed of two interacting particles. The objective is to manipulate the system's state to achieve a high degree of entanglement. To achieve this, we apply Pontryagin's Minimum Principle (PMP) to compute time-dependent external control fields that guide the system's time evolution toward desired states exhibiting maximum entanglement.
We propose a cost function designed to facilitate the generation of entangled states using the concurrence measure, aiming to achieve this within the minimum possible time. The optimal control problem is formulated as follows:
\begin{equation*} 
 \begin{aligned}
  &\mathop {\min }\limits_{\rho_A ,{t_f}} J = -C(\rho( t_f))+\Gamma\, {{t}_{f}} \\ 
  &\text {subject to:}  \\ 
  &\dot{\rho}_A(t) = F(\rho(t), u(t)), &\text{ (dynamics)}\\
  &\rho(t_0) = \rho_0 \in \mathcal{H}_\rho, &\text{ (initial condition)}\\ 
  &u\left( t \right) \in \mathcal{U} := \left\{ u \in {{L}_{\infty }^m}: {{\left\| u \right\|}_{\infty}} \le {u}_{\max}\right\} &\text{ (control constraint)} \\
  &t_f\in(t_0, \infty) &\text{ (final time constraint)}
\end{aligned} 
\end{equation*}
where \(\Gamma\) is a positive coefficient that balances the minimization of the time required and the concurrence. The final time \(t_f\) is treated as free and optimized. The control fields \(u(t)\) are modeled as time-dependent bounded measurable functions.

The time evolution of the density operator \({\rho}(t)\) is modeled using the Liouville-von Neumann equation
\begin{equation}\label{liou}
\dot{\rho}(t)= -i[H_0+H_c(t), \rho(t)], \quad H_c(t)= \sum _ {k=1} ^ m u_k(t)H_k    
\end{equation}
where \( H_0 \) represents the system's intrinsic Hamiltonian, 
\( H_k \) is the fixed control Hamiltonian, and \( u_k(t) \) is the time-dependent scalar control input. Note that the dynamics in the optimal control formulation is considered for the reduced density matrix $\rho_A$ with the action of the partial trace:
\[
F(\rho(t),u(t))= \operatorname{Tr}_B\dot{\rho},
\]
where the partial trace operation \( \operatorname{Tr}_B(\cdot) \) is performed over subsystem \( B \), leaving the reduced density matrix \( \rho_A(t) \) of subsystem \( A \).
In this framework, the control field \( u(t) \) is optimized to guide \( \rho_A(t) \) toward a state that maximizes the quantum property of interest, concurrence, while adhering to the system constraints and dynamics.

The existence of an optimal solution to this control problem is guaranteed by the application of PMP, under the 
regularity conditions (smoothness of dynamics, controllability, compactness of the control space and the continuity of the cost functional). 
The objective is to determine the pair \( (\rho^*, u^*) \) that is optimal in the sense that the value of the cost functional is minimized over the set of all feasible solutions, achieving the desired entanglement in minimum time.

\subsection{Necessary Optimality Conditions in the Form of Pontryagin's Minimum Principle}
In this subsection, we explore the solution to the proposed optimal control problem, which focuses on minimizing the time required to achieve high degrees of entanglement in a bipartite quantum system. This problem is subject to bounded control inputs and aims to maximize the concurrence. The solution involves determining the optimal state trajectory \( \rho^* \), the optimal control \( u^* \), and the corresponding Lagrange multiplier \( \pi^* \). These components are derived using PMP, which provides the necessary optimality conditions.

\begin{proposition}
According to Pontryagin's Minimum Principle, the optimal control law \( u^* \), corresponding to the proposed optimal control problem, exhibits a bang-bang nature, where the control \( u^*(t) \) alternates between its extreme values.
\end{proposition}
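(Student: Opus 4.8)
The plan is to exploit the control-affine structure of the reduced dynamics. First, I would substitute the control Hamiltonian $H_c(t) = \sum_{k=1}^m u_k(t) H_k$ into the Liouville-von Neumann equation~\eqref{liou} and apply the partial trace. Since both the commutator and $\operatorname{Tr}_B$ are linear, this yields
$$F(\rho, u) = f_0(\rho) + \sum_{k=1}^m u_k(t)\, f_k(\rho),$$
where $f_0(\rho) = \operatorname{Tr}_B\!\big(-i[H_0, \rho]\big)$ and $f_k(\rho) = \operatorname{Tr}_B\!\big(-i[H_k, \rho]\big)$ are independent of the control. Thus $F$ is affine in $u$, which is the structural feature responsible for the bang-bang behavior.

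Next, I would form the Pontryagin Hamiltonian. Crucially, the concurrence enters the cost only as a terminal term $-C(\rho(t_f))$, and the minimum-time penalty $\Gamma t_f$ contributes a constant running cost $\mathcal{L} = \Gamma$; neither depends on $u$. Consequently, with $\langle \cdot, \cdot\rangle$ denoting the real Hilbert-Schmidt pairing on the space of reduced density operators, the Hamiltonian
$$\mathcal{H} = \Gamma + \langle \pi, f_0(\rho)\rangle + \sum_{k=1}^m u_k\, \langle \pi, f_k(\rho)\rangle$$
is affine in each component $u_k$, with no quadratic regularization. Applying the minimization condition~\eqref{eq:optimal_control} over the box $\|u\|_\infty \le u_{\max}$, I would introduce the switching function $\phi_k(t) := \langle \pi(t), f_k(\rho(t))\rangle$ and observe that, because $\mathcal{H}$ is linear in $u_k$, the pointwise minimizer is
$$u_k^*(t) = -u_{\max}\,\operatorname{sgn}\!\big(\phi_k(t)\big).$$
This takes only the extreme values $\pm u_{\max}$ wherever $\phi_k(t) \ne 0$, which establishes the bang-bang structure on the set where the switching functions do not vanish.

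The main obstacle is excluding \emph{singular arcs}, that is, intervals of positive measure on which some $\phi_k$ vanishes identically and the sign condition leaves $u_k^*$ undetermined. I would address this by differentiating $\phi_k$ along the optimal trajectory, using the costate equation~\eqref{eq:adjoint_evolution} together with the dynamics; the time-derivatives produce expressions involving iterated commutators (Lie brackets) of $H_0$ and the $H_k$. Invoking the controllability assumption stated after the problem formulation—equivalently, that these brackets generate the relevant Lie algebra—one argues that the switching functions and their successive derivatives cannot simultaneously vanish on an interval, so each zero set of $\phi_k$ reduces to isolated switching instants. Making this non-degeneracy claim fully rigorous, rather than the formal derivation of the bang-bang law itself, is the delicate part of the argument.
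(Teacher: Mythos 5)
Your proposal follows essentially the same route as the paper: because the dynamics \eqref{liou} are affine in $u$ and the running cost $\Gamma$ is control-independent, the Pontryagin Hamiltonian is affine in each $u_k$, and pointwise minimization over the box constraint pushes $u_k^*$ to $\pm u_{\max}$ according to the sign of the switching function $\phi_k(t)$ --- which is exactly the paper's control law \eqref{control}. Two differences are worth noting. First, the paper spends several lines verifying that $\operatorname{Tr}\left(\pi^\dagger \operatorname{Tr}_B([H,\rho])\right)$ is a real number by exploiting the skew-Hermitian structure of the commutator; you bypass this entirely by working with the real Hilbert--Schmidt pairing from the outset, which is cleaner. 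Second, and more substantively, you explicitly confront the possibility of singular arcs (intervals on which some $\phi_k$ vanishes identically), whereas the paper's proof passes over this in silence and simply asserts that linearity in $u$ ``ensures'' the optimum lies on the boundary of the control set. You are right that this is the delicate step: without ruling out singular arcs, neither argument fully establishes that $u^*$ \emph{alternates between} extreme values rather than merely taking them off a possibly non-negligible singular set. Your Lie-bracket sketch (differentiating $\phi_k$ along the trajectory and invoking the bracket-generating condition) is the standard and correct way to close this, but as you acknowledge it is not carried out; in that sense your proposal matches the paper's level of rigor on the bang-bang law itself and goes beyond it in identifying what remains to be proved.
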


\begin{proof}
To determine \( \rho^* \), \( u^* \), and \( \pi^* \), the Pontryagin Hamiltonian function \( \mathcal{H} \) is defined as
\begin{equation}
\mathcal{H}(\rho(t), u(t), \pi(t)) = \operatorname{Tr}\left(\pi^\dagger(t) \dot{\rho}_A \right), \quad t \in [t_0, t_f]    
\end{equation}
According to Pontryagin's Minimum Principle, the optimal control \( u^*(t) \) must minimize the Hamiltonian \( \mathcal{H} \) at every time \( t \). To obtain the optimal control law, we further analyze the Pontryagin Hamiltonian as
\begin{equation*}
\mathcal{H}(\rho(t), u(t), \pi(t)) = \operatorname{Tr}\left(\pi^\dagger(t) \dot{\rho}_A \right)=-i \operatorname{Tr}\left(\pi^\dagger(t) \operatorname{Tr}_B ([H,\rho])
 \right)
\end{equation*}
where $H=H_0+H_C$. Since the quantum mechanical Hamiltonian $H$ and the density matrix $\rho$ are both Hermitian, one concludes that
$([H, \rho])^\dagger = - [H, \rho]$. This means that the commutator $[H, \rho]$ results is a skew-Hermitian matrix, so, its diagonal elements are purely imaginary. Hence, its diagonal and offdiagonals elements satisfy 
$([H, \rho])_{jj}=-([H, \rho])_{jj}^*$ and $([H, \rho])_{jk}=-([H, \rho])_{jk}^*$, respectively. 
In addition, according to trace properties, $\operatorname{Tr}([H, \rho])=\operatorname{Tr}(H\rho)-\operatorname{Tr}(\rho H)=0$, which means that the trace of $\operatorname{Tr}_B ([H,\rho])$ is also zero. Hence, we obtain the following form for $\beta \in \mathbb{C}$ and $\alpha \in i\mathbb{R}$,
\begin{equation*}
\operatorname{Tr}_B ([H,\rho])=\begin{pmatrix} \alpha & \beta \\ -\beta^* & -\alpha \end{pmatrix}
\end{equation*}
so the Pontryagin Hamiltonian can be written as
\begin{equation*}
\begin{aligned}
\mathcal{H}(\rho(t), u(t), \pi(t)) =&  -i \operatorname{Tr} \big (   \begin{pmatrix} \pi_{11} & \pi_{12} \\ \pi_{12}^* & \pi_{22} \end{pmatrix}  \begin{pmatrix} \alpha & \beta \\ -\beta^* & -\alpha \end{pmatrix} \big)\\
=& -i (\pi_{11} \alpha - \pi_{12} \beta^* + \pi_{12}^* \beta - \pi_{22} \alpha) \\ 
=&-i( (\pi_{11}-\pi_{22}) \alpha + (\pi_{12}^* \beta-(\pi_{12}^* \beta)^*))
\end{aligned}
\end{equation*}
since $\pi_{11},\pi_{11} \in \mathbb{R}$ and $(\pi_{12}^* \beta-(\pi_{12}^* \beta)^*)$ also gives a pure imaginary number, the Pontryagin Hamiltonian results in a real number. Now, we need to consider minimizing the Pontryagin Hamiltonian since
the optimal control is determined by
\[
u^*(t) = \argmin_{u(t)\in\mathcal{U}} \mathcal{H}(\rho(t), u(t), \pi(t)),
\]
for all \( t \in [t_0, t_f] \), subject to \( u(t) \in \mathcal{U} \), the admissible set of controls.
The minimization depends on the control-dependent term in \( \mathcal{H} \), i.e.,
$\operatorname{Tr}\left(\pi^\dagger(t) \operatorname{Tr}_B([H_c, \rho(t)])\right)$.
Thus, the optimal control law \( u^*(t) \) is determined by the sign of \( \operatorname{Tr} \sum_k (\pi^\dagger(t) \operatorname{Tr}_B([H_k, \rho(t)])) \), and is given as
\begin{equation}\label{control}
u^*_k(t) =
\begin{cases}
u_{k,\max}, & \text{if } -i \operatorname{Tr}(\pi^\dagger(t) \operatorname{Tr}_B([H_k, \rho(t)])) < 0, \\
-u_{k,\max}, & \text{otherwise}.
\end{cases}
\end{equation}
This control law explicitly incorporates the partial trace \( \operatorname{Tr}_B \), which accounts for the dynamics of the reduced subsystem \( A \) by tracing out the influence of subsystem \( B \). The linear dependence of \( \mathcal{H} \) on \( u(t) \), specifically through the term \( -iu_k(t)\operatorname{Tr}\left(\pi^\dagger(t) \operatorname{Tr}_B([H_k, \rho(t)])\right) \), ensures that the optimal solution lies on the boundaries of the admissible control set, leading to a bang-bang nature of \( u^*(t) \).
This bang-bang control strategy, as described in \eqref{control}, drives the system efficiently toward the desired entangled state while satisfying the constraints of the control problem.
\end{proof}
\begin{proposition}
Under PMP, the evolution of the adjoint variable \( \pi(t) \) is governed by the adjoint equation
\begin{equation}\label{adj}
 \dot{\pi}(t) = -\pi^\dagger {\dot{\rho}_A}^\prime,   
\end{equation}
where \( {\dot{\rho}_A}^\prime = \frac{\partial \dot{\rho}_A}{\partial \rho_A}\) represents the functional derivative of the time evolution operator \( \dot{\rho}_A \) with respect to the state \( \rho_A \).
At the final time \( t_f \), the adjoint variable satisfies the terminal condition
\begin{equation}\label{adjfinal}
\pi(t_f) = \frac{\sqrt{2} \rho_A(t_f)}{\sqrt{1 - \operatorname{Tr}(\rho_A(t_f)^2)}}.
\end{equation}
\end{proposition}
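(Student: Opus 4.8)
The plan is to obtain both assertions as direct specializations of the general PMP conditions recalled in Section~\ref{PMP}, applied to the Pontryagin Hamiltonian $\mathcal{H} = \operatorname{Tr}(\pi^\dagger \dot{\rho}_A)$ and the cost $J = -C(\rho(t_f)) + \Gamma t_f$. The only genuine work is the matrix calculus of the two gradients; the structural statements follow immediately once those gradients are in hand.

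For the adjoint equation, I would invoke the costate evolution law \eqref{eq:adjoint_evolution}, namely $\dot{\pi} = -\partial\mathcal{H}/\partial\rho_A$, and differentiate $\mathcal{H} = \operatorname{Tr}(\pi^\dagger\dot{\rho}_A)$ with respect to the state. Since the costate $\pi$ is held fixed under this partial derivative while $\dot{\rho}_A = F(\rho,u)$ carries the entire dependence on $\rho_A$, the chain rule yields $\partial\mathcal{H}/\partial\rho_A = \pi^\dagger\,(\partial\dot{\rho}_A/\partial\rho_A) = \pi^\dagger\dot{\rho}_A'$, which is exactly \eqref{adj} after negation. I would state explicitly that $\dot{\rho}_A'$ denotes the (Fr\'echet) derivative of the map $\rho_A \mapsto \dot{\rho}_A$, so that the product $\pi^\dagger\dot{\rho}_A'$ is the pullback of the costate through the linearized dynamics and the trace pairing $\operatorname{Tr}(\pi^\dagger\,\cdot\,)$ reproduces the intended linear functional.

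For the terminal condition, I would apply the final-time transversality relation \eqref{eq:transversality_final_costate}, $\pi(t_f) = \partial J/\partial\rho_A(t_f)$. The time penalty $\Gamma t_f$ and the running cost contribute nothing to the state gradient, so only the endpoint term $-C(\rho(t_f))$ survives. Writing $C = \sqrt{2(1-\operatorname{Tr}(\rho_A^2))}$ and using the key identity $\partial\operatorname{Tr}(\rho_A^2)/\partial\rho_A = 2\rho_A$, valid because $\rho_A$ is Hermitian, the chain rule through the square root gives $\partial(-C)/\partial\rho_A = 2\rho_A / \sqrt{2(1-\operatorname{Tr}(\rho_A^2))}$, and rationalizing the factor $2/\sqrt{2} = \sqrt{2}$ produces $\sqrt{2}\,\rho_A/\sqrt{1-\operatorname{Tr}(\rho_A^2)}$, which is precisely \eqref{adjfinal}.

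The step I expect to be the main obstacle is not the calculus itself but pinning down the matrix-derivative conventions so that the two formulas are internally consistent: the Hamiltonian gradient must be contracted against the linearized dynamics in the same inner product $(X,Y)\mapsto\operatorname{Tr}(X^\dagger Y)$ used to define $\mathcal{H}$, and the Hermiticity of $\rho_A$ must be used both to justify $\partial\operatorname{Tr}(\rho_A^2)/\partial\rho_A = 2\rho_A$ and to guarantee that the resulting $\pi(t_f)$ is itself Hermitian, as any admissible costate paired with the reduced density matrix should be. I would also record that the denominator $\sqrt{1-\operatorname{Tr}(\rho_A^2)}$ is nonzero away from maximal entanglement, so that the terminal condition is well defined along any trajectory that has not yet reached concurrence one.
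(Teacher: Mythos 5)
Your proposal is correct and follows essentially the same route as the paper: the adjoint equation is obtained by differentiating $\mathcal{H}=\operatorname{Tr}(\pi^\dagger\dot{\rho}_A)$ with respect to $\rho_A$ via the chain rule through $\dot{\rho}_A$, and the terminal condition follows from the transversality relation applied to the endpoint cost $-C(\rho(t_f))$ using $\partial\operatorname{Tr}(\rho_A^2)/\partial\rho_A = 2\rho_A$. Your added remarks on the trace inner-product convention, Hermiticity of the resulting costate, and nonvanishing of the denominator away from maximal concurrence are sensible refinements but do not change the argument.
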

\begin{proof}
Under PMP, the adjoint variable \( \pi(t) \) evolves according to the adjoint equation
\[
\dot{\pi}(t) = -\frac{\partial \mathcal{H}}{\partial \rho_A(t)}=-\frac{\partial}{\partial \rho_A} \text{Tr}(\pi^\dagger(t) \dot{\rho}_A(t))=-\pi^\dagger \frac{\partial {\dot{\rho}_A(t)}}{ \partial {\rho}_A},
\]
where the Hamiltonian depends on \( \rho_A(t) \) implicitly through \( \dot{\rho}_A(t) \). 

The terminal condition for \( \pi(t) \) follows from PMP and the specific structure of the problem. At the final time \( t_f \), the adjoint variable must satisfy
\[
\pi(t_f) = \frac{\partial J}{\partial \rho_A}=-\frac{\partial C(\rho)}{\partial \rho_A}=
-\sqrt{2} \frac{1}{2 \sqrt{1 - \text{Tr}(\rho_A^2)}} (-2 \rho_A),
\]
which leads to \eqref{adjfinal} for the final time.
\end{proof}

\begin{remark}
The explicit expression for the functional derivative of \( \dot{\rho}_A \) with respect to \( \rho_A \) depends on the form of $\rho$ and the correlation term between the two systems $A$ and $B$. To analyze this, we first implement the time derivative of $\rho$ using the Schmidt decomposition as
\begin{equation}
\begin{aligned}
\dot{\rho}_A&=\operatorname{Tr}_B (-i [H_0 + H_c , \rho])=-i \operatorname{Tr}_B ([H_0 + H_c , \rho])\\
&= -i \sum_{k=1}^r \left( \mathbb{I}_A \otimes \langle v_k^B | \right) [H_0 + H_c , \rho] \left( \mathbb{I}_A \otimes |v_k^B \rangle \right)
\end{aligned} 
\end{equation}
We now consider the derivative of $\dot{\rho}_A$ with respect to ${\rho}_A $, referred to as $\dot{\rho}_A^\prime$, so
\begin{equation}
\dot{\rho}_A^\prime= -i \sum_{k=1}^r \left( \mathbb{I}_A \otimes \langle v_k^B | \right) \frac{\partial [H_0 + H_c , \rho]}{\partial \rho_A}
\left( \mathbb{I}_A \otimes |v_k^B \rangle \right) 
\end{equation}
Given the total density matrix $\rho = \rho_A \otimes \rho_B + \rho_{cor}$, 
where \( \rho_{\mathrm{cor}} \) represents the correlation term. We analyze two cases:

\begin{itemize}
    \item[-] \text{Case 1: \( \rho_{\mathrm{cor}} = 0 \) (No Correlations Between \( A \) and \( B \))}

    If there are no correlations, the total density matrix \( \rho \) is separable:
    \[
    \rho = \rho_A \otimes \rho_B.
    \]
    In this case, the derivative of \( \rho \) with respect to \( \rho_A \) is straightforward. Since \( \rho_B \) is independent of \( \rho_A \), we have
    \[
    \frac{\partial \rho}{\partial \rho_A} = \mathbb{I}_B \otimes \rho_B.
    \]
    Substituting this into the functional derivative of \( \dot{\rho}_A \), we obtain
    \[
    \dot{\rho}_A' = -i \sum_{k=1}^r \left( \mathbb{I}_A \otimes \langle v_k^B | \right) \left[ H_0 + H_c, \mathbb{I}_B \otimes \rho_B \right] \left( \mathbb{I}_A \otimes |v_k^B \rangle \right).
    \]
    Here, the commutator simplifies because of the separability of \( \rho \).

    \item[-] \text{Case 2: \( \rho_{\mathrm{cor}} \neq 0 \) (Presence of Correlations)}

    When correlations exist, the total density matrix \( \rho \) includes the correlation term \( \rho_{\mathrm{cor}} \), and the derivative becomes more complicated. Specifically, the functional derivative of \( \rho \) with respect to \( \rho_A \) is given by
    \[
    \frac{\partial \rho}{\partial \rho_A} = \mathbb{I}_B \otimes \rho_B + \frac{\partial \rho_{\mathrm{cor}}}{\partial \rho_A}.
    \]
    The second term \( \frac{\partial \rho_{\mathrm{cor}}}{\partial \rho_A} \) accounts for the interdependence between subsystems \( A \) and \( B \) through their correlations. Substituting this into the expression for \( \dot{\rho}_A' \), we have
    \begin{equation*}
        \begin{aligned}
             \dot{\rho}_A'  & = -i \sum_{k=1}^r \left( \mathbb{I}_A \otimes \langle v_k^B | \right) \left[ H_0 + H_c, \mathbb{I}_B \otimes \rho_B + \frac{\partial \rho_{\mathrm{cor}}}{\partial \rho_A} \right] \\
            &\left( \mathbb{I}_A \otimes |v_k^B \rangle \right)
        \end{aligned}
    \end{equation*}
    Here, the correlation term \( \rho_{\mathrm{cor}} \) introduces additional complexity, and its specific form determines the nature of the functional derivative.
\end{itemize}
\end{remark}

\begin{proposition}
Given the cost functional \( J \) and the Pontryagin Hamiltonian \( \mathcal{H}(t) \), for the Pontryagin Hamiltonian associated with the optimization problem, at the final time \( t_f \), the following transversality condition holds:
\begin{equation}\label{htf}
\mathcal{H}(t_f) =\frac{\sqrt{2} \rho_A }{2 \sqrt{1 - \text{Tr}(\rho_A^2)}} \dot{\rho}_A - \Gamma
\end{equation}
\end{proposition}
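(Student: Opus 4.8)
The plan is to read $\mathcal{H}(t_f)$ straight off the free--final--time transversality condition \eqref{eq:transversality_final_time}, namely $\mathcal{H}(t_f) = -\partial J/\partial t_f$, which is the only transversality relation not yet exploited. Because the running cost of the problem vanishes (no Lagrangian appears, only the terminal penalty $\Gamma t_f$), the entire computation collapses onto the endpoint cost $J = -C(\rho(t_f)) + \Gamma t_f$, so the task reduces to differentiating $J$ with respect to the free terminal time and negating.

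First I would split $\partial J/\partial t_f$ into two pieces. The explicit time penalty $\Gamma t_f$ differentiates to the constant $\Gamma$, which furnishes the $-\Gamma$ term of \eqref{htf}. The concurrence $-C(\rho_A(t_f))$ depends on $t_f$ only implicitly, through the terminal state $\rho_A(t_f)$; along the optimal trajectory $\frac{d}{dt_f}\rho_A(t_f) = \dot{\rho}_A(t_f)$, so the chain rule gives $\frac{d}{dt_f}C(\rho_A(t_f)) = \frac{\partial C}{\partial \rho_A}\,\dot{\rho}_A$.

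The substantive step is the evaluation of $\partial C/\partial \rho_A$ for $C(\rho_A) = \sqrt{2(1-\operatorname{Tr}(\rho_A^2))}$. Using Hermiticity of $\rho_A$ together with the matrix identity $\frac{\partial}{\partial \rho_A}\operatorname{Tr}(\rho_A^2) = 2\rho_A$, differentiating the outer square root produces precisely the prefactor $\frac{\sqrt{2}}{2\sqrt{1-\operatorname{Tr}(\rho_A^2)}}$ that already appears in the terminal costate \eqref{adjfinal} of Proposition 2; this is reassuring, since $\mathcal{H}(t_f)$ and $\pi(t_f)$ must be built from the same gradient of the concurrence. Contracting this gradient against $\dot{\rho}_A$ and combining with the time-penalty contribution then assembles the claimed identity \eqref{htf}.

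The main obstacle I expect is the bookkeeping of signs and numerical factors in differentiating the square-root concurrence, together with keeping the matrix-valued gradient $\partial C/\partial \rho_A$ consistent with the trace pairing that defines the Pontryagin Hamiltonian $\mathcal{H} = \operatorname{Tr}(\pi^\dagger \dot{\rho}_A)$. In particular, one must ensure that $\partial C/\partial \rho_A$ is contracted with $\dot{\rho}_A$ under the same trace, so that the concurrence contribution is a genuine scalar; checking that the resulting prefactor coincides with the one in \eqref{adjfinal} is the cleanest internal consistency test, and it is exactly the point where a stray factor of two or an inverted sign is most likely to creep in.
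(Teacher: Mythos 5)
Your proposal follows exactly the paper's own route: apply the free--final--time transversality condition \eqref{eq:transversality_final_time}, differentiate the terminal cost $-C(\rho(t_f))+\Gamma t_f$ with respect to $t_f$, and push the concurrence term through the chain rule via $\rho_A(t_f)$ using the gradient $\partial C/\partial\rho_A$. Your proposed consistency check against \eqref{adjfinal} is well taken --- carried out carefully it reveals that the prefactor printed in \eqref{htf} differs from $-\partial C/\partial\rho_A$ in \eqref{adjfinal} by a sign and a factor of two, a discrepancy that lies in the paper's own statement rather than in your argument.
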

\begin{proof}
From the transversality condition of the final time one has
\begin{equation*}
\mathcal{H}(t_f) = -\frac{\partial J}{\partial t_f} =-\frac{\partial (-C(\rho( t_f))+\Gamma t_f) }{\partial t_f}.
\end{equation*}
We need to compute 
\begin{equation*}
\begin{aligned}
\frac{\partial C(\rho(t_f))}{\partial t_f} 
 &= \frac{\partial C(\rho)}{\partial \rho_A(t_f)}
 \frac{\partial \rho_A(t_f)}{\partial t_f}=\frac{\sqrt{2} \rho_A }{2 \sqrt{1 - \text{Tr}(\rho_A^2)}} \dot{\rho}_A
\end{aligned}
\end{equation*}
In addition $-\frac{\partial \Gamma t_f}{\partial t_f}=- \Gamma$, so by summing the two terms one obtains \eqref{htf}.
\end{proof}

\section{Numerical Validation of Optimal Entanglement Control}\label{sec:numerical_validation}
In this section, we numerically validate the entanglement control strategy developed in the preceding section. We begin by introducing the system Hamiltonian. The Pauli matrices, which form the basis for constructing the system's Hamiltonians, are defined as
\[
\sigma_x = \begin{pmatrix} 0 & 1 \\ 1 & 0 \end{pmatrix}, \quad 
\sigma_y = \begin{pmatrix} 0 & -i \\ i & 0 \end{pmatrix}, \quad 
\sigma_z = \begin{pmatrix} 1 & 0 \\ 0 & -1 \end{pmatrix}.
\]
In the proposed example, the internal Hamiltonian \( H_0 \) (in the absence of external fields) is given by
\[
H_0 = \sigma_z \otimes \sigma_z.
\]
The control Hamiltonian \( H_c(t) \), driven by time-dependent control fields \( u_k(t) \), consists of a linear combination of coupling Hamiltonians \( H_k \), where \( k = 1, 2, 3 \), constructed as
\[
\begin{aligned}
    H_1 &= \sigma_x \otimes \sigma_y + \sigma_z \otimes \sigma_z, \\
    H_2 &= \sigma_x \otimes \sigma_z + \sigma_z \otimes \sigma_x, \\
    H_3 &= \sigma_y \otimes \sigma_z + \sigma_z \otimes \sigma_y.
\end{aligned}
\]

We consider a general two-qubit superposition state 
as
\[
|\psi(t)\rangle = \alpha_{00}|00\rangle + \alpha_{01}|01\rangle + \alpha_{10}|10\rangle + \alpha_{11}|11\rangle,
\]
where \( \alpha_{ij} \) are complex coefficients. The corresponding density matrix is given by
\[
\rho = 
\begin{bmatrix}
|\alpha_{00}|^2 & \alpha_{00}\alpha_{01}^* & \alpha_{00}\alpha_{10}^* & \alpha_{00}\alpha_{11}^* \\
\alpha_{01}\alpha_{00}^* & |\alpha_{01}|^2 & \alpha_{01}\alpha_{10}^* & \alpha_{01}\alpha_{11}^* \\
\alpha_{10}\alpha_{00}^* & \alpha_{10}\alpha_{01}^* & |\alpha_{10}|^2 & \alpha_{10}\alpha_{11}^* \\
\alpha_{11}\alpha_{00}^* & \alpha_{11}\alpha_{01}^* & \alpha_{11}\alpha_{10}^* & |\alpha_{11}|^2
\end{bmatrix}.
\]
To quantify entanglement, we compute the partial trace over subsystem \( B \), resulting in the reduced density matrix:
\[
\text{Tr}_B(\rho) = 
\begin{bmatrix}
|\alpha_{00}|^2 + |\alpha_{01}|^2 & \alpha_{00} \alpha_{10}^* + \alpha_{01} \alpha_{11}^* \\
\alpha_{10} \alpha_{00}^* + \alpha_{11} \alpha_{01}^* & |\alpha_{10}|^2 + |\alpha_{11}|^2
\end{bmatrix}.
\]
We focus on driving the system toward Bell states, which represent maximally entangled two-qubit states. The Bell states are defined as
\[
\begin{aligned}
    |\Phi^+\rangle &= \frac{1}{\sqrt{2}}(|00\rangle + |11\rangle),\quad |\Phi^-\rangle &= \frac{1}{\sqrt{2}}(|00\rangle - |11\rangle) \\ 
    |\Psi^+\rangle &= \frac{1}{\sqrt{2}}(|01\rangle + |10\rangle), \quad |\Psi^-\rangle &= \frac{1}{\sqrt{2}}(|01\rangle - |10\rangle).   
\end{aligned}
\]
These states exhibit specific entanglement patterns and serve as benchmarks for evaluating the effectiveness of our control strategy.
To investigate the dynamics of entanglement generation, we initialize the system in a perturbed separable state to avoid critical points where the entanglement gradient vanishes. The initial state takes the form
\[
\rho_0 = (1 - \epsilon)\rho_{\text{sep}} + \epsilon \delta\rho,
\]
where \( \rho_{\text{sep}}\) represents a separable state, \( \delta\rho \) introduces a small entangled component, and \( 0<\epsilon \ll 1 \) controls the perturbation magnitude. This setup ensures the system starts close to separable while still enabling smooth optimization dynamics. By applying the control fields \( u_k(t) \), we numerically optimize the state evolution to achieve a maximally entangled target state.
For instance, we considered a perturbed separable state by expressing $\rho_{\text{sep}}=|00\rangle\langle 00|$ and $\delta\rho=|\Phi^+\rangle\langle \Phi^+|$.

Figure \ref{fig:control_fields} illustrates the evolution of the control fields \(u_1\), \(u_2\), and \(u_3\) over time. Notably, the control field \(u_1\) exhibits switching behavior, transitioning between \(-1\) and \(1\) at specific times. This behavior is interpreted as a bang-bang control strategy.
The observed switching aligns with the principles of the PMP, where the control input is driven to its extremal values to maximize system performance.
The abrupt transitions (e.g., around \(t = 0.5\)) correspond to changes in the system's switching function, which determines the optimal control direction. The fields \(u_2\) and \(u_3\) remain constant at the lower bound.

Figure \ref{fig:optimal_concurrence} illustrates the evolution of concurrence over time under the influence of optimal switching control fields:
The figure demonstrates the effectiveness of optimal control in generating entanglement in a quantum system. Key observations include:
At the beginning (\(t < 0.3\)), the concurrence increases steadily as the control fields drive the system from a separable state toward an entangled state. After \(t = 0.6\), the concurrence approaches its maximum value of 1, representing a maximally entangled state. The rapid transitions in the control fields, as seen in the previous figure, optimize the state evolution to maximize entanglement efficiently. Such behavior is essential for quantum technologies, including quantum communication and quantum computing, where high levels of entanglement are crucial.

\begin{figure}[t]
    \centering
    \includegraphics[width=0.47\textwidth]{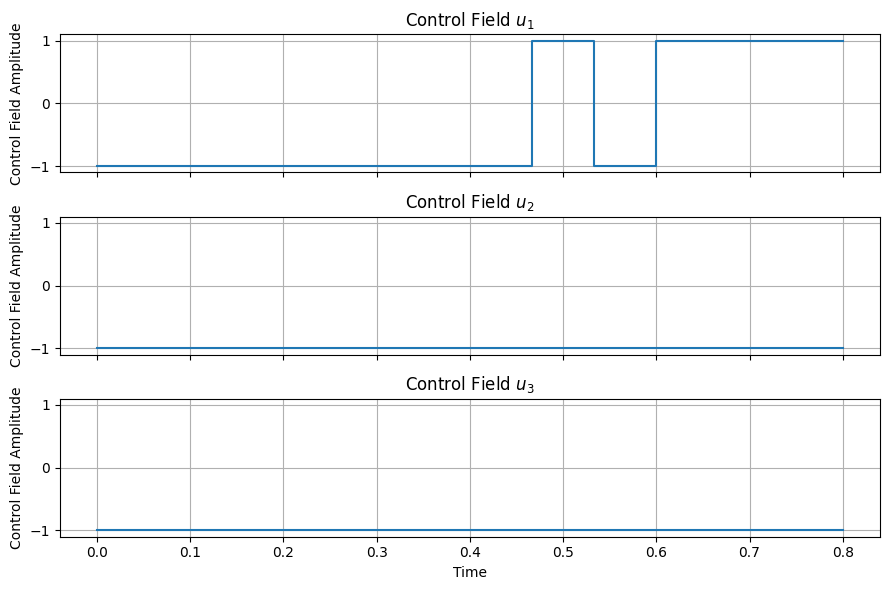}
    \caption{Control field amplitudes \(u_1\), \(u_2\), and \(u_3\) as a function of time.}
    \label{fig:control_fields}
\end{figure}

\begin{figure}[t]
    \centering
    \includegraphics[width=0.47\textwidth]{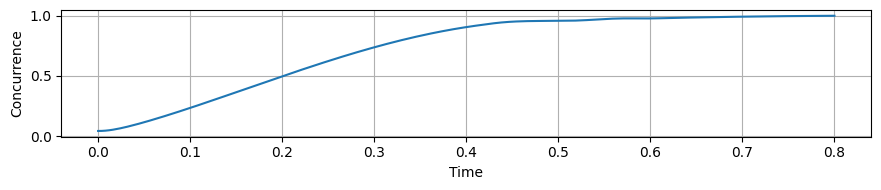} 
    \caption{Evolution of concurrence as a function of time using optimal switching control fields. The switching behavior in the control fields ensures a rapid increase in concurrence, reaching its maximum value of 1, indicative of maximal entanglement.}
    \label{fig:optimal_concurrence}
\end{figure}

\section{Conclusions}
This work demonstrates the efficacy of Pontryagin's Minimum Principle in designing optimal control strategies for generating maximally entangled states in bipartite quantum systems. By formulating and solving a control problem centered on maximizing concurrence, the proposed approach efficiently guides the system dynamics using time-dependent switching control fields. Numerical validation confirms the rapid entanglement generation and achieving to the maximal concurrence. Future work could extend this framework to open quantum systems, multi-partite entanglement, and the implications of this method for advancing quantum technologies that rely on high-fidelity entanglement, such as quantum communication, sensing, and computation. 

\bibliographystyle{IEEEtran}

\bibliography{ref}

\end{document}